\let\accentvec\vec 
\let\vec\accentvec
\title{Non-Abelian Gauge-Invariant Cellular Automata}
\titlerunning{Non-abelian Gauge-Invariant CA}
\author{Pablo Arrighi\inst{1,2}\orcidID{0000-0002-3535-1009} \and Giuseppe Di Molfetta\inst{1,3}\orcidID{0000-0002-1261-7476} \and Nathana\"el Eon\inst{1,4}\orcidID{0000-0003-0649-0891}}
\authorrunning{P. Arrighi, G. Di Molfetta, N. Eon}
\institute{Aix-Marseille Univ, Universit\'e de Toulon, CNRS, LIS, Marseille, France
\and
IXXI, Lyon, France
\and
Departamento de F{\'{i}}sica Te{ó}rica and IFIC, Universidad de Valencia-CSIC, Dr. Moliner 50, 46100-Burjassot, Spain
\and 
\'Ecole Centrale, France
}
\newcommand{\gamm}{\bar{\gamma}}
\newcommand{\Gamm}{\bar{\Gamma}}
\newcommand{\ZZ}{\mathbb{Z}}
\begin{document}

\maketitle

\begin{abstract}
Gauge-invariance is a mathematical concept that has profound implications in Physics---as it provides the justification of the fundamental interactions. It was recently adapted to the Cellular Automaton (CA) framework, in a restricted case. In this paper, this treatment is generalized to non-abelian gauge-invariance, including the notions of gauge-equivalent theories and gauge-invariants of configurations.
\keywords{cellular automata \and gauge-invariance\and quantum information.}
\end{abstract}

\section{Introduction}

In Physics, symmetries are essential concepts used to derive the laws which model nature. Among them, gauge symmetries are central, since they provide the mathematical justification for all four fundamental interactions: the weak and strong forces (short range interactions), electromagnetism \cite{quigg2013gauge} and to some extent gravity (long range interactions).
In Computer Science, cellular automata (CA) constitute the most established model of computation that accounts for euclidean space. Yet its origins lies in Physics, where they were first used to model hydrodynamics and multi-body dynamics, and are now commonly used to model particles or waves. The study of gauge symmetries in CA is expected to benefit both fields. In order to obtain discrete systems that can simulate physics on the one hand. In order to bring gauge theory to Computer Science, as a tool to study redundancy and fault-tolerant computation for instance, on the other hand.

A study of gauge symmetries in CA has been recently studied, by the same authors, in the particular case of abelian gauge symmetries \cite{arrighi2018gauge}. Here, we provide a generalization to non-abelian gauge symmetries. In Physics, the generalization from abelian to non-abelian gauge theories was a non-trivial but crucial step, that enabled taking into account a wider range of phenomena.

The paper is organized as follows. Sec. \ref{sec:gaugeinv} is a reformulation in a more general framework of the gauge-invariance in CA definitions and procedure given in \cite{arrighi2018gauge}. It provides the context and notations used in the rest of the paper. In Sec. \ref{sec:nonabelian}, a complete example of non-abelian gauge-invariant CA is given through the application of the \textit{gauging procedure}. It provides an example of the route one may take in order to obtain a gauge-invariant CA, starting from one that does not implement the symmetry. Sec. \ref{sec:equivalence} discusses the equivalence of theories and develops the notion of invariant sets. We summarize in Sec. \ref{sec:conclusion} and provide related works perspectives.

\section{Gauge-invariance}

\paragraph{Theory to be gauged.\label{sec:gaugeinv}} 
In this paper, {\em theories} stands for CA. We start from a theory $R$, which internal state space is $\Sigma$ and local rule is $\lambda_R$. We denote by $\psi_{x,t}$ the state of the cell at position $x$ and time $t$. $\psi_t$ denotes a configuration which is a function from $\ZZ$ into $\Sigma$ that gives a state for each position $x$. As a running example, we pick possibly the simplest and most nature physics-like reversible CA (RCA) : one that has particles moving left and right. More precisely, in this example $\Sigma = \{ 0, ..., N\} ^2$, therefore, we can write $\psi_{x,t}=(\psi^l_{x,t}, \psi^r_{x,t})$ where the exponents $l$ and $r$ denotes the left and right parts, each being an element of $\{ 0, ..., N\}$. The local rule $\lambda_R$ takes the right-incoming left sub-cell (i.e. $\psi^l_{x+1}$) to the left, and the left-incoming right sub-cell (i.e. $\psi^r_{x-1}$) to the right:
\begin{align}\label{eq:localrule}
\psi_{x,t+1}=\left(\psi^l_{x,t+1},\psi^r_{x,t+1}\right)=\lambda_R\left(\psi^r_{x-1,t},\psi^l_{x+1,t}\right)=\left(\psi^l_{x+1,t},\psi^r_{x-1,t}\right)
\end{align}
Such a CA is said to be expressed in the block-circuit form which is often referred as the (Margolus-)Partitioned CA in Computer Science vocabulary \cite{ToffoliMargolusModelling}, or Lattice-gas automata in Physics \cite{wolf2004lattice}. Fig-\ref{fig:1example} gives an example of this dynamics for $N=2$ (where the three colors represent the three possible states), and Fig-\ref{fig:1framework} introduces the conventions used.

\begin{figure}[ht]
\begin{minipage}[t]{0.45\textwidth}
    \Large
    \centering
    \resizebox{\textwidth}{!}{\begin{tikzpicture}


\draw[color=gray] (5.5,0) -- (12.5,7);
\draw[color=gray] (2.5,1) -- (8.5,7);
\draw[color=gray] (10,0.5) -- (14,4.5);
\draw[color=gray] (2,4.5) -- (4.5,7);

\draw[color=gray] (4,0.5) -- (2.5,2);
\draw[color=gray] (14,2.5) -- (9.5,7);
\draw[color=gray] (8,0.5) -- (2.5,6);
\draw[color=gray] (12,0.5) -- (5.5,7);

\node at (3, 1.5) [draw,scale=1,circle,color=gray, fill=white]{}; 
\node at (7, 1.5) [draw,scale=1,circle,color=gray, fill=white]{}; 
\node at (11, 1.5) [draw,scale=1,circle,color=gray, fill=white]{}; 

\node at (5, 3.5) [draw,scale=1,circle,color=gray, fill=white]{}; 
\node at (9, 3.5) [draw,scale=1,circle,color=gray, fill=white]{}; 
\node at (13, 3.5) [draw,scale=1,circle,color=gray, fill=white]{}; 

\node at (3, 5.5) [draw,scale=1,circle,color=gray, fill=white]{}; 
\node at (7, 5.5) [draw,scale=1,circle,color=gray, fill=white]{}; 
\node at (11, 5.5) [draw,scale=1,circle,color=gray, fill=white]{};

  \filldraw[color=black, fill=white, thick](4,0) rectangle (5,1);
  \filldraw[color=black, fill=black, thick](5,0) rectangle (6,1);
  \filldraw[color=black, fill=white, thick](8,0) rectangle (9,1);
  \filldraw[color=black, fill=white, thick](9,0) rectangle (10,1);
  \filldraw[color=black, fill=gray, thick](12,0) rectangle (13,1);
  \filldraw[color=black, fill=white, thick](13,0) rectangle (14,1);

  \filldraw[color=black, fill=white, thick](2,2) rectangle (3,3);
  \filldraw[color=black, fill=white, thick](3,2) rectangle (4,3);
  \filldraw[color=black, fill=white, thick](6,2) rectangle (7,3);
  \filldraw[color=black, fill=black, thick](7,2) rectangle (8,3);
  \filldraw[color=black, fill=gray, thick](10,2) rectangle (11,3);
  \filldraw[color=black, fill=white, thick](11,2) rectangle (12,3);
    
  \filldraw[color=black, fill=white, thick](4,4) rectangle (5,5);
  \filldraw[color=black, fill=white, thick](5,4) rectangle (6,5);
  \filldraw[color=black, fill=gray, thick](8,4) rectangle (9,5);
  \filldraw[color=black, fill=black, thick](9,4) rectangle (10,5);
  \filldraw[color=black, fill=white, thick](12,4) rectangle (13,5);
  \filldraw[color=black, fill=white, thick](13,4) rectangle (14,5);

  \filldraw[color=black, fill=white, thick](2,6) rectangle (3,7);
  \filldraw[color=black, fill=white, thick](3,6) rectangle (4,7);
  \filldraw[color=black, fill=gray, thick](6,6) rectangle (7,7);
  \filldraw[color=black, fill=white, thick](7,6) rectangle (8,7);
  \filldraw[color=black, fill=white, thick](10,6) rectangle (11,7);
  \filldraw[color=black, fill=black, thick](11,6) rectangle (12,7);


\draw[color=black] (1, 0.5) node {t};
\draw[color=black] (1, 2.5) node {t+1};
\draw[color=black] (1, 4.5) node {t+2};
\draw[color=black] (1, 6.5) node {t+3};

\draw[color=black] (3, -1) node {x-2};
\draw[color=black] (5, -1) node {x-1};
\draw[color=black] (7, -1) node {x};
\draw[color=black] (9, -1) node {x+1};
\draw[color=black] (11, -1) node {x+2};
\draw[color=black] (13, -1) node {x+3};

\end{tikzpicture}}
    \caption{\label{fig:1example}A spacetime diagram of $R$.}
\end{minipage}\hfill
\begin{minipage}[t]{0.45\textwidth}
    \centering
    \resizebox{\textwidth}{!}{\begin{tikzpicture}

\draw[thick] (2,1) -- (4,2)
    (6,1) -- (4,2)
    (3.8,2) -- (3.8, 3)
    (4.2,2) -- (4.2, 3);

\node at (4, 2) [draw,scale=2,circle,color=black, fill=white,thick]{$\lambda_R$}; 

\filldraw[color=black, dashed, fill=white, thick](0,0) rectangle (1,1);
\filldraw[color=black, fill=white, thick](1,0) rectangle (2,1);

\filldraw[color=black, fill=white, thick](6,0) rectangle (7,1);
\filldraw[color=black, dashed, fill=white, thick](7,0) rectangle (8,1);

\filldraw[color=black, fill=white, thick](3,3) rectangle (4,4);
\filldraw[color=black, fill=white, thick](4,3) rectangle (5,4);

\draw (3.5,3.5) node {\color{gray} $\psi^l$};
\draw (4.5,3.5) node {\color{gray} $\psi^r$};

\draw (1, -0.5) node {\large x-1};
\draw (4, -0.5) node {\large x};
\draw (7, -0.5) node {\large x+1};

\draw (-0.5, 0.5) node {\large t};
\draw (-0.5, 3.5) node {\large t+1};

\end{tikzpicture}}
    \caption{\label{fig:1framework}Conventions.}
\end{minipage}
\end{figure}

This theory $R$ is {\em to be gauged} because does not yet implement the gauge symmetry, which is a local invariance under a group of operators called {\em gauge transformations.} The theory $R$ will eventually be extended into a theory $T$ that does implement the symmetry, through a {\em gauging procedure}.

\paragraph{Gauge transformations.} The gauge symmetry is an invariance of the evolution under a {\em gauge transformation}. What we call a gauge transformation is based on a monoid 
$\Gamma$ of operators acting on the internal state space $\Sigma$. A gauge transformation is the application, onto the state of each cell in a configuration, of one of the operators of $\Gamma$. More formally, for each cell $x$ is attributed an element $\gamma_x$ in $\Gamma$. Thereby specifying a gauge transformation $\gamm$ acting over entire configurations:
\begin{equation}
    \gamm :     \begin{matrix} 
        \Sigma^\ZZ & \rightarrow & \Sigma^\ZZ \\
        c  & \mapsto & \big(x\mapsto \gamma_x (c_x)\big)
    \end{matrix}
\end{equation}
We denote $\Gamm \cong \Gamma^\ZZ$ the set of these gauge transformations $\gamm$.

\paragraph{Gauge-invariance.} A theory $T$ is said gauge-invariant if its evolution is impervious to gauge-transformations. In other words, applying a gauge transformation $\gamm$ followed by the evolution $T$, \textit{amounts to the same} as applying the evolution $R$ directly. What is meant by \textit{amounts to the same} is that both outputs are the same, up to a gauge-transformation $\gamm'$. 
Given that we want the evolution $T$ to be deterministic, we impose that $\gamm'$ be determined from $\gamm$ by means of some theory $Z$. 
After those consideration, gauge-invariance can be defined as follows which is a reformulation from \cite{arrighi2018gauge} :
\begin{definition}[Gauge-invariance]
A theory $T$ is gauge-invariant if and only if there exists $Z$ a theory such that for all $\gamm \in \Gamm$
\begin{align}\label{eq:gaugeinvariance}
Z(\gamm) \circ T=T\circ \gamm
\end{align}
where the symbol $\circ$ represents the composition.
\end{definition}

The gauge-invariance is represented in Fig-\ref{fig:1gaugetransform} where $\gamma_0, \gamma_1$ and $\gamma'$ are local gauge transformations.
\begin{figure}[ht]
    \Large
    \centering
    \resizebox{0.7\textwidth}{!}{\newcommand{\stateTikz}[6]{
  \ifthenelse{#5 > 0 \AND #6>0}{\def\mycola{white}}{\def\mycola{black}};
  \ifthenelse{#5 > 0}{\def\mycolb{black}}{\def\mycolb{white}};
  \ifthenelse{#6 > 0}{\def\mycolc{black}}{\def\mycolc{white}};
  \filldraw[color=\mycola, fill=\mycolb, thick](#1- #3 , #2) rectangle (#1 , #2 + #4);
  \filldraw[color=\mycola, fill=\mycolc, thick](#1, #2) rectangle (#1 + #3 , #2 + #4);
}

\newcommand{\basegraph}[4]{
	\stateTikz{0}{0}{1}{1}{#1}{#1};
	\stateTikz{4}{0}{1}{1}{#2}{#2};
	\stateTikz{2}{3}{1}{1}{#3}{#4};
	\draw (0.5,1) -- (2.5,3)
		(3.5,1) -- (1.5,3)
		(-0.5,1) -- (-1.5, 2)
		(4.5,1) -- (5.5,2);
	\draw (1.5,4) -- (0.5,5)
		(2.5,4) -- (3.5,5);
}

\begin{tikzpicture}
	
	\basegraph{0}{1}{0}{1};
	\draw[color=black, fill=white, thick] (0,1.6) ellipse (1.2 and 0.4) node {$\gamma_0$};
	\draw[color=black, fill=white, thick] (4,1.6) ellipse (1.2 and 0.4) node {$\gamma_1$};

	\draw[very thick, dashed] (6.5,0) -- (6.5,6);
	\fill [white] (6,2.5) rectangle (7,3.5);
	\draw (6.5, 3) node {=};
	
	\begin{scope}[shift={(9,0)}]
		\basegraph{0}{1}{1}{0};
		\draw[color=black, fill=white, thick] (2,4.6) ellipse (1.2 and 0.4) node {$\gamma'$};
	\end{scope}
 
\end{tikzpicture}}
    \caption{\label{fig:1gaugetransform}Illustration of gauge-invariance.}
\end{figure}

\paragraph{Gauging procedure.} In order to extend the non-gauge-invariant theory $R$ into a gauge-invariant theory $T$ we will apply a gauging procedure, which is strongly inspired from Physics. The procedure begins by introducing new information, namely the gauge field $A$, at each point in spacetime, and to extend the theory $R$ into an $A$-dependant theory $R_A$ that features gauge-invariance. We also need to decide how that gauge field changes under gauge transformations. In order to keep the notations simple, we will write $\gamm(A)$ for the gauge transformation of the gauge field $A$. Even though the gauge field may not transform the same way as the initial configuration, the context shall be enough to lift any ambiguity. Let us make precise what we mean by gauge-invariance, whenever a theory depends on an external field. 
\begin{definition}[Inhomogeneous gauge-invariance]
A theory $T_\bullet$ is said to be {\em inhomogeneous gauge-invariant} if and only if there exists $Z$ a theory such that for all $\gamm \in \Gamm$,
\begin{align}\label{eq:inhomoggaugeinvariance}
Z(\gamm) \circ T_\bullet=T_{\gamm (\bullet)}\circ \gamm
\end{align}
\end{definition}
Typically, this condition puts a strong constraint on the way the gauge transformation must act over the gauge field, i.e. $\gamm(A)$. An example is given in Sec.\ref{sec:procedure3}.

Having determined such gauge transformation, the final step of the gauging procedure is to give a theory that specifies the dynamics of the gauge field. This theory, joint together with $R_A$, must give a global theory $T$ that verifies the original gauge-invariance condition \eqref{eq:gaugeinvariance}. Again, an example is given in Sec.\ref{sec:procedure4}.

All-in-all, the gauging procedure can be summarized in these four steps which we will use as a basis for the rest of the paper :
\begin{enumerate}
    \item Start with a theory to be gauged $R$ and a set of gauge transformations $\Gamm$.
    \item Introduce the gauge field $A$, transform $R$ into $R_A$.
    \item Define $\gamm(A)$ through the requirement that $R_A$ verifies condition \eqref{eq:inhomoggaugeinvariance}.
    \item Give a theory $A$ in order to define a global gauge-invariant theory $T$.
\end{enumerate}
The first two steps are free for the user/physicist to choose according to the system to be modelled. The third step however is mostly determined by the gauge-invariance condition. Finally, the degree of freedom in the choice of the dynamics for $A$ -- i.e. the last step -- may depend on the specific cases and no general characterization of the leftover degrees of freedom exists. In the abelian case however, the {\em gauge fixing soundness} result helps \cite{arrighi2018gauge}.

\section{Non-abelian gauge-invariance\label{sec:nonabelian}}

In gauge theories, we use the term abelian or non-abelian to refer to the (non-) commutativity of the monoid $\Gamma$ of operations over $\Sigma$, or equivalently to that of the gauge-transformations $\Gamm$. In physics, abelian gauge theories give rise to electromagnetism, while non-abelian gauge-theories (Yang-Mills theories in particular) allow for the formulation of the whole standard model---namely the electromagnetic, weak and strong interactions. Whether gravitation is a non-abelian gauge theory is open to interpretation, but it certainly has some flavour of it. By-the-way, non-abelian really means possibly-non-abelian, it still comprises the abelian subcase.
In this section, we produce a complete example of a non-abelian gauge-invariant CA by applying the gauging procedure. 

\subsection*{Back to the running example.}
{\em Step 1.} Recall that our point of departure was the theory $R$, whose rule $\lambda_R$ given by Eq.\eqref{eq:localrule} is takes two subcells into two subcells through a bijection.  
In order to have as simple an example as possible, we choose $N=2$, thus $\Sigma=\{0,1,2\}^2$. Let us choose a monoid $\Gamma$ that follows the same structure as in \cite{arrighi2018gauge}, i.e. so that the operators $\gamma \in \Gamma$ act identically on both elements of $\Sigma$ (this choice is traditional in gauge theories, but not a necessity of our definitions). That is, they act by applying the same permutation on both subcells. More formally, let us denote by $S(N)$ the set of permutations over $\{0,...,N-1\}$, we let:
\begin{equation}
\Gamma=\{ s \otimes s\; | \; s\in S(N) \}.
\end{equation}
Given some $\gamma=s \otimes s$, the notations $\gamma^l=\gamma^r$ will be short for $s$.~\\[5mm]  
{\em Step 2.} This step is to introduce an external gauge field $A$ and make $R$ into an $A$-dependent rule $R_A$. We take the gauge field $A$ to be defined at every half-integer space position (and every time step). This definition is physics-inspired and corresponds to the convention used in \cite{arrighi2019quantum}. We let $A$ take its values in $S(3)$ the set of permutations over 3 elements. We let $R_A$ be defined by the $A$-dependent local rule $\lambda_{R_A}$, which is spacetime-dependent since $A$ is spacetime-dependent: 
\begin{equation}
\left(\lambda_{R_A}\right)_{x,t}= \lambda_{R} \circ (A_{x-1/2, t}\otimes A_{x+1/2,t}^{-1})
\end{equation}
The induced evolution that used to be described by Eq.\eqref{eq:localrule} now becomes :
\begin{align}\label{eq:localruleA}
    \left(\psi^l_{x,t+1},\psi^r_{x,t+1}\right) &= \left(\lambda_{R_A}\right)_{x,t} \left(\psi^r_{x-1,t}, \psi^l_{x+1,t}\right) = \left(A_{x+1/2,t}^{-1}\psi^l_{x+1,t},A_{x-1/2,t}\psi^r_{x-1,t}\right)
\end{align}
The local rule $\lambda_{R_A}$ is represented in Fig-\ref{fig:2frame}.

\begin{figure}
    \centering
    \resizebox{0.5\textwidth}{!}{\begin{tikzpicture}

\draw[thick] (2,1) -- (4,2)
    (6,1) -- (4,2)
    (3.8,2) -- (3.8, 3)
    (4.2,2) -- (4.2, 3);
\draw[thick, red] (2.5,0.5) -- (4,2) -- (5.5,0.5);

\node at (4, 2) [draw,scale=1,circle,color=black, fill=white,thick]{\large $\lambda_{R \color{red}_A}$};

\filldraw[color=black, dashed, fill=white, thick](0,0) rectangle (1,1);
\filldraw[color=black, fill=white, thick](1,0) rectangle (2,1);

\filldraw[color=black, fill=white, thick](6,0) rectangle (7,1);
\filldraw[color=black, dashed, fill=white, thick](7,0) rectangle (8,1);

\filldraw[color=black, fill=white, thick](3,3) rectangle (4,4);
\filldraw[color=black, fill=white, thick](4,3) rectangle (5,4);

\node at (2.5, 0.5) [draw,scale=1.5,color=red, fill=white,thick]{$A$}; 
\node at (5.5, 0.5) [draw,scale=1.5,color=red, fill=white,thick]{$A$};

\draw (1, -0.5) node {\large x-1};
\draw (4, -0.5) node {\large x};
\draw (7, -0.5) node {\large x+1};
\draw (2.5, -0.5) node {\large x-1/2};
\draw (5.5, -0.5) node {\large x+1/2};

\draw (-0.5, 0.5) node {\large t};
\draw (-0.5, 3.5) node {\large t+1};

\end{tikzpicture}}
    \caption{Introducing the gauge field.\label{fig:2frame}}
\end{figure}
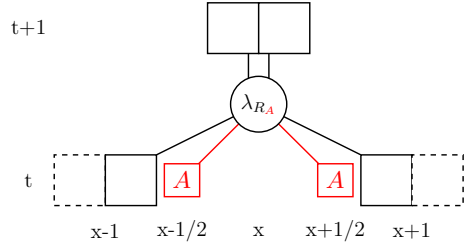

The way $R_A$ depends on the gauge field and the definition of the gauge field itself is motivated through the fact that $A$ can be made to cancel any gauge-transformation done on the input.~\\[5mm]
{\em Step 3.\label{sec:procedure3}} The gauge transformation of the gauge field $A$ is dictated by the condition \eqref{eq:gaugeinvariance}. Such condition can be developed locally due to the locality of the theory and of the gauge transformation. It gives : there exists $Z$ a theory such that for all $\gamm \in \Gamm$ and $x\in \ZZ$
\begin{align}
    Z(\gamm)_{x} \circ (\lambda_{R_A})_x = (\lambda_{R_{\gamm(A)}})_{x} \circ (\gamma_{x-1}^r \otimes \gamma_{x+1}^l).
\end{align}
Replacing the local rule by its expression gives the following equation:
\begin{align}
    Z(\gamm)_{x} &\circ \lambda_R \circ (A_{x-1/2} \otimes A_{x+1/2}^{-1}) \\ 
     &= \lambda_R  \circ (\gamm(A)_{x-1/2}\otimes \gamm(A)_{x+1/2}^{-1}) \circ (\gamma_{x-1}^r \otimes \gamma_{x+1}^l).
\end{align}

This equation is equivalent to the following system
\begin{align}
    &\begin{cases}
        \gamm(A)_{x-1/2} \circ \gamma_{x-1}^l = Z(\gamm)_x^l \circ A_{x-1/2}   \\
        \gamm(A)_{x+1/2}^{-1} \circ \gamma_{x+1}^r = Z(\gamm)_x^r \circ A_{x+1/2}^{-1}  
    \end{cases} \\
    &\Leftrightarrow \begin{cases}
        \gamm(A)_{x-1/2}  = Z(\gamm)_x^l \circ A_{x-1/2} \circ (\gamma^l_{x-1})^{-1}  \\
        \gamm(A)_{x+1/2} =  \gamma_{x+1}^r \circ A_{x+1/2}  \circ (Z(\gamm)_x^r)^{-1}
    \end{cases}
\end{align}

Such a system gives an information and a constraint. First, the gauge transformation of the gauge field $A$ is given explicitly in terms of that over $\psi$, which was the main objective. Second, it puts some constraints over $Z$. In order to satisfy both these equations for any $A$ and $\gamm$ given as input, the choice of $Z$ is limited. For instance, $Z$ cannot be a translation to the right, because that would impose for $\gamma$ to be the same at every position. One solution is to choose $Z(\gamm) = \gamm$. Such choice, common in physics, was also taken in \cite{arrighi2019quantum} which gives a quantum CA for one-dimensional QED (quantum electrodynamics).

In the end, the gauge-transformation of the gauge field $A$ reads, for $x$ an half-integer and using $\gamma^r=\gamma^l$:
\begin{equation} \label{eq:gaugetransfA}
    \gamm(A)_{x}  = \gamma_{x+1/2}^l \circ A_{x} \circ (\gamma_{x-1/2}^l)^{-1} 
\end{equation}

{\em Step 4.\label{sec:procedure4}} We now have an inhomogeneous gauge-invariant theory $R_A$, with respect to $\Gamma$ and $Z=I$. The last step is to provide a theory for the dynamics of the gauge field $A$, in order to yield a complete gauge-invariant theory $T$ that evolves both $\psi$ and $A$---i.e. over the internal state space $S(N) \times \Sigma \times S(N)$. The usual way to do this is to propose an inhomogeneous gauge-invariant theory $S_\psi$, with respect to the same $\Gamma$ and $Z$, that acts on $A$ but depend on $\psi$. Then combining $R_A$ and $S_\psi$, which are both inhomogeneous gauge-invariant, will give a gauge-invariant theory $T$ with respect to $\Gamma$ and $Z$. Let us justify this by writing down the inhomogeneous gauge-invariance condition \eqref{eq:inhomoggaugeinvariance} for $R_A$ and $S_\psi$: for all $\psi\in\Sigma^\ZZ$,  $A\in S(N)$ and $\gamm\in \Gamm$:
\begin{align}
    R_{\gamm(A)} \circ \gamm (\psi) &= Z(\gamm) \circ R_A (\psi) \\
    S_{\gamm(\psi)} \circ \gamm (A) &= Z(\gamm) \circ S_\psi (A) 
\end{align}
Combining both, we obtain
\begin{align}\label{eq:proofofT}
    T\circ \gamm (\psi, A) 
     &= \Big( R_{\gamm(A)} \circ \gamm (\psi), \: S_{\gamm(\psi)} \circ \gamm (A)  \Big) \\
     &= \Big( Z(\gamm) \circ R_A (\psi), \: Z(\gamm) \circ S_\psi (A) \Big) \\
     &= Z(\gamm) \circ T (\psi, A)
\end{align}
which is exactly the gauge-invariance condition \eqref{eq:gaugeinvariance} for $T$.

In order to find a suitable $S_\psi$, one therefore has to write the inhomogeneous gauge-invariance condition \eqref{eq:inhomoggaugeinvariance}, substituting for $Z$ and $\gamma$ by their definitions, including $\gamma(A)$ as given by \eqref{eq:gaugetransfA}. Several possible $S_\psi$ may meet this condition: to the our best knowledge there is no general notion of a minimal $S_\psi$. However, the running example does exhibit a minimal solution which is the identity. The inhomogeneous gauge-invariance can then be verified easily : for all $x\in \ZZ$ and $\gamm \in \Gamm$,
\begin{align}
    Z(\gamm)_x \circ (\lambda_{S_\psi})_x &= (\lambda_{S_{\gamm(\psi)}})_x \circ \gamma_x 
    \Longleftrightarrow \gamma_x \circ I = I \circ \gamma_x.
\end{align}
Combining $R_A$ with $S_\psi =I$, gives a gauge-invariant theory $T$ with local rule $\lambda_T$ as follows: for any spacetime position $x,t$,
\begin{align}
     \left(A_{x-1/2}, \psi^l_{x}, \psi^r_{x},  A_{x+1/2}\right)_{t+1}  &= \lambda_T \left(\psi^r_{x-1}, A_{x-1/2}, A_{x+1/2}, \psi^l_{x+1}\right)_t \\ 
     &= \left(A_{x-1/2},A_{x+1/2}^{-1}\psi^l_{x+1},A_{x-1/2}\psi^r_{x-1},A_{x+1/2} \right)_t
\end{align}
where the final time index applies to every element of the list. This rule is illustrated in Fig-\ref{fig:2full}. 

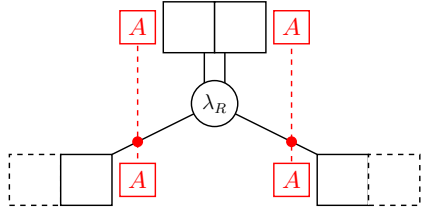
\begin{figure}[ht!]
    \centering
    \resizebox{0.45\textwidth}{!}{\begin{tikzpicture}

\draw[thick] (2,1) -- (4,2);
\draw[thick] (6,1) -- (4,2);
\draw[thick] (3.8,2) -- (3.8, 3)
    (4.2,2) -- (4.2, 3);
\draw[thick, red, dashed] (2.5,0.5) -- (2.5,3.5)
                        (5.5,3.5) -- (5.5,0.5);

\node at (2.5,1.25) [draw, scale=0.6, circle, color=red, fill=red, thick]{};
\node at (5.5,1.25) [draw, scale=0.6, circle, color=red, fill=red, thick]{};


\node at (4, 2) [draw,scale=1,circle,color=black, fill=white,thick]{\large$\lambda_R$}; 

\filldraw[color=black, dashed, fill=white, thick](0,0) rectangle (1,1);
\filldraw[color=black, fill=white, thick](1,0) rectangle (2,1);

\filldraw[color=black, fill=white, thick](6,0) rectangle (7,1);
\filldraw[color=black, dashed, fill=white, thick](7,0) rectangle (8,1);

\filldraw[color=black, fill=white, thick](3,3) rectangle (4,4);
\filldraw[color=black, fill=white, thick](4,3) rectangle (5,4);

\node at (2.5, 0.5) [draw,scale=1.5,color=red, fill=white,thick]{$A$}; 
\node at (5.5, 0.5) [draw,scale=1.5,color=red, fill=white,thick]{$A$};
\node at (2.5, 3.5) [draw,scale=1.5,color=red, fill=white,thick]{$A$}; 
\node at (5.5, 3.5) [draw,scale=1.5,color=red, fill=white,thick]{$A$};  

\end{tikzpicture}}
    \caption{A complete non-abelian gauge-invariant theory over $\psi$ and $A$\label{fig:2full}. Whenever a black right-moving (resp. left-moving) wire for $\psi$ crosses a red wire for $A$, then $A$ (resp. $A^{-1}$) gets applied upon $\psi$.}
\end{figure}

\begin{figure}[ht!]
    \centering
    \resizebox{0.8\textwidth}{!}{\begin{tikzpicture}

\newcommand{\framee}[2]{
    \begin{scope}[shift={(#1,#2)}]
        \draw[thick] (2,1) -- (4,2)
            (6,1) -- (4,2)
            (3.8,2) -- (3.8, 3)
            (4.2,2) -- (4.2, 3);
        \node at (4, 2) [draw,scale=3,circle,color=black, fill=white,thick]{};
    \end{scope}
} 
\framee{0}{0};
\framee{6}{0};
\framee{12}{0};
\draw (0,1) -- (-1,1.5) (20,1) -- (21,1.5);
\filldraw[color=black, fill=white, thick](0,0) rectangle (1,1);
\filldraw[color=black, fill=white, thick](1,0) rectangle (2,1);
\filldraw[color=black, fill=white, thick](6,0) rectangle (7,1);
\filldraw[color=black, fill=gray, thick](7,0) rectangle (8,1);
\filldraw[color=black, fill=black, thick](12,0) rectangle (13,1);
\filldraw[color=black, fill=white, thick](13,0) rectangle (14,1);
\filldraw[color=black, fill=white, thick](18,0) rectangle (19,1);
\filldraw[color=black, fill=white, thick](19,0) rectangle (20,1);

\framee{-3}{3};
\framee{3}{3};
\framee{9}{3};
\framee{15}{3};
\filldraw[color=black, fill=white, thick](3,3) rectangle (4,4);
\filldraw[color=black, fill=white, thick](4,3) rectangle (5,4);
\filldraw[color=black, fill=white, thick](9,3) rectangle (10,4);
\filldraw[color=black, fill=gray, thick](10,3) rectangle (11,4);
\filldraw[color=black, fill=white, thick](15,3) rectangle (16,4);
\filldraw[color=black, fill=white, thick](16,3) rectangle (17,4);

\filldraw[color=black, fill=white, thick](0,6) rectangle (1,7);
\filldraw[color=black, fill=white, thick](1,6) rectangle (2,7);
\filldraw[color=black, fill=white, thick](6,6) rectangle (7,7);
\filldraw[color=black, fill=white, thick](7,6) rectangle (8,7);
\filldraw[color=black, fill=white, thick](12,6) rectangle (13,7);
\filldraw[color=black, fill=gray, thick](13,6) rectangle (14,7);
\filldraw[color=black, fill=white, thick](18,6) rectangle (19,7);
\filldraw[color=black, fill=white, thick](19,6) rectangle (20,7);

\draw (1, -0.5) node {\Large $x-3$};
\draw (4, -0.5) node {\Large $x-2$};
\draw (7, -0.5) node {\Large $x-1$};
\draw (10, -0.5) node {\Large $x$};
\draw (13, -0.5) node {\Large $x+1$};
\draw (16, -0.5) node {\Large $x+2$};
\draw (19, -0.5) node {\Large $x+3$};

\draw (-1.5, 0.5) node {\Large $t$};
\draw (-1.5, 3.5) node {\Large $t+1$};
\draw (-1.5, 6.5) node {\Large $t+2$};

\newcommand{\frameg}[2]{
    \ifthenelse{#2 > 0}{\def\mycol{white}}{\def\mycol{red}};
    \draw[thick, dashed, red] (2.5+3*#1,0.5) -- (2.5+3*#1,6.5);
    \node at (2.5+3*#1, 0.5) [draw,scale=2,color=red, fill=\mycol,thick]{};
    \node at (2.5+3*#1, 3.5) [draw,scale=2,color=red, fill=\mycol,thick]{};
    \node at (2.5+3*#1, 6.5) [draw,scale=2,color=red, fill=\mycol,thick]{};
    \node at (2.5+3*#1, 1.25) [draw,scale=0.7,circle,color=red, fill=red,thick]{};
    \node at (2.5+3*#1, 4.25) [draw,scale=0.7,circle,color=red, fill=red,thick]{};
}

\frameg{0}{1};
\frameg{1}{1};
\frameg{2}{1};
\frameg{3}{0};
\frameg{4}{1};
\frameg{5}{1};

\end{tikzpicture}}
    \caption{The complete theory is richer than the initial theory\label{fig:2fullexample}. Here an empty circle for $A$ represent the identity while a full circle represents the permutation of white and black colours (leaving gray untouched). At position $x+1/2$, the input coming from $x+1$ is toggled from black to white.}
\end{figure}

This fully non-abelian gauge-invariant cellular automaton was built through the simplest possible choices via the gauging procedure. Notice that it is not so trivial however : the introduction of the gauge field $A$, motivated by the will to restore gauge-invariant, ends up truly enriching the phenomenology of the theory.

Having developed a gauge-invariant theory means having manage to introduce\ldots a redundancy. Thus, there will be other gauge-invariant theories that are equivalent, up to that redudancy. Can we characterize those \textit{equivalent} theories?

\section{Equivalence and invariant sets\label{sec:equivalence}}


Given a set of gauge transformations $\Gamm$, multiple theories may lead to similar dynamics with respect to $\Gamm$: 
\begin{definition}[Equivalence of two theories] Let $T$ be a gauge-invariant theory with respect to $\Gamm$ and $Z$. $T$ is simulated by $T'$ if and only if for all configuration $c$ there exists $\gamm, \gamm' \in \Gamm$ such that $(\gamm' \circ T)(c) = (T'\circ \gamm)(c)$. They are equivalent if both simulate each other. We denote the equivalence as $T\equiv T'$. 
\end{definition}
Thus $T\equiv T'$ if and only if they give rise to the same dynamics up to a gauge transformation.

$T$ is gauge-invariant with respect to a specific $Z$. Adding some constraints on $Z$ and $\Gamma$, one may characterize the equivalence of two theories using different quantifiers and constraints which may be useful for some specific problems. More specifically, it will be easier to prove that two theories are equivalent using the characterization than the definition.
\begin{proposition}[Characterization of equivalence of theories]
Let $T$ be a gauge-invariant theory with respect to $\Gamm$ and $Z$. If $Z$ is reversible and $\Gamma$ is a group, then $T$ is simulated by $T'$ if and only if
\begin{enumerate}
    \item $\forall c, \exists \gamm \in \Gamm$ such that $T(c) = T'\circ \gamm (c)$.
    \item $\forall c, \forall \gamm \in \Gamm$, $\exists \gamm' \in \Gamm$ such that $\gamm' \circ T(c) = T'\circ \gamm (c)$.
\end{enumerate}
\end{proposition}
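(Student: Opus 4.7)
The proof splits along the biconditional.

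The easy direction ($\Leftarrow$) is immediate: condition~1 supplies, for each $c$, a $\gamm \in \Gamm$ with $T(c) = T'\circ\gamm(c)$, so pairing it with $\gamm' = \text{id}_{\Gamm}$ (which lies in $\Gamm$ since $\Gamma$, being a group, contains its identity) yields the simulation equation. Alternatively, condition~2 directly provides a matching $\gamm'$ for any chosen $\gamm$. This direction uses neither gauge-invariance of $T$ nor reversibility of $Z$.

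For the substantive direction ($\Rightarrow$), the strategy is to transport the witnessed simulation equation $\beta_c\circ T(c) = T'\circ\alpha_c(c)$ along the $\Gamm$-orbit of $c$, normalizing the shape of the witnesses into the forms prescribed by 1 and 2. The three available tools are: the gauge-invariance identity $T\circ\gamm = Z(\gamm)\circ T$, which shuttles a gauge transformation from the input to the output of $T$; reversibility of $Z$, which upgrades this to $\beta\circ T = T\circ Z^{-1}(\beta)$ for every $\beta\in\Gamm$; and the group structure of $\Gamma$, which ensures each $\gamm\in\Gamm$ acts bijectively on $\Sigma^\ZZ$ and that $\Gamm$ is closed under composition and inversion.

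To obtain condition~2, fix $c$ and an arbitrary $\gamm \in \Gamm$, and apply the simulation condition at the configuration $\gamm(c)$: this yields $\alpha,\beta \in \Gamm$ such that $\beta\circ T(\gamm(c)) = T'\circ\alpha\circ\gamm(c)$. Substituting the gauge-invariance identity $T(\gamm(c)) = Z(\gamm)\circ T(c)$ gives $(\beta\circ Z(\gamm))\circ T(c) = T'\circ(\alpha\circ\gamm)(c)$, which has the required shape but with $\alpha\circ\gamm$ in the argument of $T'$ rather than the prescribed $\gamm$. Cancelling the residual $\alpha$ is the main hurdle for this half: the intended mechanism is to re-apply simulation at $\alpha^{-1}\circ\gamm(c)$, using the group structure of $\Gamm$ to invert $\alpha$, and to compose the resulting equations through gauge-invariance until the accumulated prefix collapses.

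To obtain condition~1, start again from $\beta_c\circ T(c) = T'\circ\alpha_c(c)$ and rewrite the left-hand side via $\beta_c\circ T = T\circ Z^{-1}(\beta_c)$. Setting $c' := Z^{-1}(\beta_c)(c)$ and using the invertibility of $Z^{-1}(\beta_c)$ in $\Gamm$ yields $T(c') = T'\circ\bigl(\alpha_c\circ Z^{-1}(\beta_c)^{-1}\bigr)(c')$, i.e.\ condition~1 at $c'$ with witness $\alpha_c\circ Z^{-1}(\beta_c)^{-1} \in \Gamm$. The key technical point is coverage: as $c$ varies over $\Sigma^\ZZ$, the reindexed $c'$ must also sweep out all of $\Sigma^\ZZ$. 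Since $\beta_c$ depends on $c$, the map $c\mapsto c'$ is not evidently surjective; the expected resolution is to invoke simulation at a $c$ chosen, for each target $c^\star$, so that $Z^{-1}(\beta_c)(c) = c^\star$, exploiting both the existential freedom in the simulation witnesses and the solvability provided by reversibility of $Z$ together with the group structure of $\Gamma$.
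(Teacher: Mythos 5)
Your backward direction is fine, but both halves of the forward direction have genuine gaps, and they share a common cause: you only ever commute gauge transformations through $T$, never through $T'$. For condition~1, your reindexing $c \mapsto c' := Z^{-1}(\beta_c)(c)$ does create a coverage problem, and the resolution you sketch is circular: $\beta_c$ is a witness \emph{produced by} the simulation hypothesis only after $c$ has been fixed, so you cannot choose $c$ as a function of $\beta_c$ to hit a prescribed target $c^\star$. For condition~2, ``re-apply simulation at $\alpha^{-1}\circ\gamm(c)$ and compose until the accumulated prefix collapses'' is not an argument: each new invocation of the simulation hypothesis introduces fresh existential witnesses, and nothing forces the process to close up.

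The paper's proof avoids both problems with one observation you are missing: the gauge-invariance of the \emph{simulating} theory $T'$ (implicit in the setting, and used with the same $Z$), combined with reversibility of $Z$, gives $\delta \circ T' = T' \circ Z^{-1}(\delta)$ for every $\delta \in \Gamm$. This lets you normalize witnesses \emph{without changing the configuration at which the equation is evaluated}. Concretely, the paper proves the cycle (simulation) $\Rightarrow$ (1) $\Rightarrow$ (2) $\Rightarrow$ (simulation). From $\gamm'\circ T(c) = T'\circ\gamm(c)$ one gets $T(c) = \gamm'^{-1}\circ T'\circ\gamm(c) = T'\circ\bigl(Z^{-1}(\gamm'^{-1})\circ\gamm\bigr)(c)$, which is condition~1 at the \emph{same} $c$ --- no surjectivity issue arises. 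Then, given condition~1 with witness $\sigma$ and an arbitrary $\gamm_1$, write $\sigma = \gamm_3\circ\gamm_1$ with $\gamm_3 = \sigma\circ\gamm_1^{-1}$, and push $\gamm_3$ back out to the left of $T'$: $T(c) = Z(\gamm_3)\circ T'\circ\gamm_1(c)$, i.e.\ $Z(\gamm_3)^{-1}\circ T(c) = T'\circ\gamm_1(c)$, which is condition~2 --- no iteration needed. Your plan, which applies the simulation hypothesis at $\gamm(c)$ and shuttles everything through $T$ only, cannot be completed as written.
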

\begin{proof}
We shall prove the equivalence through three implications.
\begin{itemize}
    \item The fact that (3) implies (1) is immediate. 
    
    \item Suppose (1), then for $c$ a configuration, we have $\gamm, \gamm' \in \Gamm$ such that $(\gamm'\circ T)(c) = (T'\circ \gamm)(c)$. But since $\Gamm$ is a group, it implies that $T(c) = (\gamm'^{-1} \circ T' \circ \gamm)(c)$. And since $Z$ is reversible, we obtain $T(c) = (T'\circ Z^{-1} (\gamm'^{-1}) \circ \gamm) (c)$. However, $Z^{-1} (\gamm'^{-1}) \circ \gamm$ is an element of $\Gamm$ therefore we have proven that (1) implies (2).
    
    \item Suppose (2), let $c$ be a configuration and take $\gamm\in\Gamm$ such that $T(c)=(T'\circ\gamm)(c)$. For any $\gamm_1 \in \Gamm$ there exists $\gamm_3 \in \Gamm$ such that $\gamm = \gamm_3 \circ \gamm_1$. Therefore, $T(c) = (Z(\gamm_3) \circ T' \circ \gamm_1) (c)$ which is equivalent to $(Z(\gamm_3)^{-1} \circ T)(c) = (T\circ \gamm_1) (c)$. And writing $\gamm_2 = Z(\gamm_3)^{-1}$ which is in $\Gamm$, we conclude that (2) implies (3).
\end{itemize} \qed
\end{proof}

\paragraph{Invariant.} $\Gamm$ defines a set of transformations over the set of configurations. When one configuration can be transformed into another, the two are thought of as physically equivalent. For a gauge-invariant theory $T$, equivalent configurations with respect to $\Gamm$ lead to equivalent configurations after the evolution $T$. Therefore, one may think that such a theory would be better formulated to act over the set of equivalence classes of configurations instead, i.e. those sets that are left invariant under $\Gamm$. Formally, for $\Sigma$ the internal state space of $T$, for $\psi \in \Sigma$, let $I_\psi = \{ \gamma(\psi) \; | \; \gamma \in \Gamma \}$. If $\Gamma$ is a group, which was the case in our running example, then for all $\psi$ and $\psi'$ :
$$ \exists \gamma \in \Gamma,\,\psi'=\gamm(\psi)\ \Longleftrightarrow\ I_\psi = I_{\psi'}.$$
Then $T$ is indeed equivalent to a theory $T'$ having these invariant sets at its internal state space---or rather the canonical representant elements of these.\\
However, for an inhomogeneous gauge-invariant theory $R_A$ one needs to be more careful. that is  this is true only if the invariant sets are built from $(\psi, A)$ and not just $\psi$. Indeed, an invariant set built only over $\psi$ and not considering $A$ would be like disregarding the gauge transformation of $A$ and therefore, breaking the inhomogeneous gauge-invariance. 

Moreover, to be more subtle, it is not enough to consider the invariant sets for $\psi$ and $A$ separately. Indeed the invariant set for $(\psi, A)$ is generally not the cartesian product of the invariant set of $\psi$ with that of $A$, because the gauge-transformation acts on both $\psi$ and $A$ synchronously. A simple example is given Fig-\ref{fig:invariants}, which works already when we restrict ourselves to $\Sigma=\{0,1\}^2$ for simplicity, i.e. back in the abelian case. It starts with both sides having the exact same $\psi$ but two different gauge fields related by a gauge-transformation. Here the gauge-transformation applied on $A$ is the identity everywhere except at position $x$, for which $\gamma^l=\gamma^r$ is the permutation of $0$ and $1$. This means both sub-figures have the same invariant sets for $A$ because they are linked through a gauge-transformation, and idem for $\psi$. After a time step however, the invariant sets for $\psi$ are not identical on both sides, because we cannot consider $\psi$ and $A$ separately when looking at the invariant sets. Again this problem does not appear when considering the invariant set for the couple $(\psi, A)$ directly.

\begin{figure}[ht!]
    \centering
    \resizebox{0.95\textwidth}{!}{\begin{tikzpicture}

\draw[dashed] (0,-1) -- (0,5);

\draw[color=red] (4.5, 0.5) -- (4.5,3.5)
                (7.5, 0.5) -- (7.5,3.5);
\draw[color=red,dashed] (1.5, 0.5) -- (1.5,3.5)
                (10.5, 0.5) -- (10.5,3.5);
\draw[thick] (4,1) -- (6,2)
    (8,1) -- (6,2)
    (5.8,2) -- (5.8, 3)
    (6.2,2) -- (6.2, 3);
    
\node at (6, 2) [draw,scale=1,circle,color=black, fill=white,thick]{\large$\lambda_{R}$}; 
\filldraw[color=black, fill=white, thick, dashed](2,0) rectangle (3,1);
\filldraw[color=black, fill=white, thick](3,0) rectangle (4,1);
\filldraw[color=black, fill=white, thick](8,0) rectangle (9,1);
\filldraw[color=black, fill=white, thick, dashed](9,0) rectangle (10,1);
\node at (4.5, 0.5) [draw,scale=2,color=red, fill=red,thick]{}; 
\node at (4.5, 3.5) [draw,scale=2,color=red, fill=red,thick]{}; 
\node at (7.5, 0.5) [draw,scale=2,color=red, fill=white,thick]{}; 
\node at (7.5, 3.5) [draw,scale=2,color=red, fill=white,thick]{}; 

\node at (4.5, 1.25) [draw,scale=0.7,circle,color=red, fill=red,thick]{}; 
\node at (7.5, 1.25) [draw,scale=0.7,circle,color=red, fill=red,thick]{}; 

\node at (1.5, 0.5) [draw,scale=2,color=red, fill=red,thick, dotted]{}; 
\node at (1.5, 3.5) [draw,scale=2,color=red, fill=red,thick, dotted]{}; 
\node at (10.5, 0.5) [draw,scale=2,color=red, fill=white,thick, dotted]{}; 
\node at (10.5, 3.5) [draw,scale=2,color=red, fill=white,thick, dotted]{}; 
\filldraw[color=black, fill=white, thick](5,3) rectangle (6,4);
\filldraw[color=black, fill=black, thick](6,3) rectangle (7,4);

\draw (3,-0.5) node {\large x};
\draw (6,-0.5) node {\large x+1};
\draw (9,-0.5) node {\large x+2};

\begin{scope}[shift={(-12,0)}]
    \draw[color=red] (4.5, 0.5) -- (4.5,3.5)
                    (7.5, 0.5) -- (7.5,3.5);
    \draw[color=red,dashed] (1.5, 0.5) -- (1.5,3.5)
                    (10.5, 0.5) -- (10.5,3.5);
    \draw[thick] (4,1) -- (6,2)
        (8,1) -- (6,2)
        (5.8,2) -- (5.8, 3)
        (6.2,2) -- (6.2, 3);
        
    \node at (6, 2) [draw,scale=1,circle,color=black, fill=white,thick]{\large$\lambda_{R}$}; 
    \filldraw[color=black, fill=white, thick, dashed](2,0) rectangle (3,1);
    \filldraw[color=black, fill=white, thick](3,0) rectangle (4,1);
    \filldraw[color=black, fill=white, thick](8,0) rectangle (9,1);
    \filldraw[color=black, fill=white, thick, dashed](9,0) rectangle (10,1);
    \node at (4.5, 0.5) [draw,scale=2,color=red, fill=white,thick]{}; 
    \node at (4.5, 3.5) [draw,scale=2,color=red, fill=white,thick]{}; 
    \node at (7.5, 0.5) [draw,scale=2,color=red, fill=white,thick]{}; 
    \node at (7.5, 3.5) [draw,scale=2,color=red, fill=white,thick]{};
    
    \node at (4.5, 1.25) [draw,scale=0.7,circle,color=red, fill=red,thick]{}; 
    \node at (7.5, 1.25) [draw,scale=0.7,circle,color=red, fill=red,thick]{}; 
    
    \node at (1.5, 0.5) [draw,scale=2,color=red, fill=white,thick, dotted]{}; 
    \node at (1.5, 3.5) [draw,scale=2,color=red, fill=white,thick, dotted]{}; 
    \node at (10.5, 0.5) [draw,scale=2,color=red, fill=white,thick, dotted]{}; 
    \node at (10.5, 3.5) [draw,scale=2,color=red, fill=white,thick, dotted]{}; 
    \filldraw[color=black, fill=white, thick](5,3) rectangle (6,4);
    \filldraw[color=black, fill=white, thick](6,3) rectangle (7,4);
    
    \draw (3,-0.5) node {\large x};
    \draw (6,-0.5) node {\large x+1};
    \draw (9,-0.5) node {\large x+2};
\end{scope}

\end{tikzpicture}}
    \caption{Both sub-figures initially have the same invariant sets for $\psi$ and $A$ respectively. After a time step, this is not true for $\psi$: they do not share an invariant set. This figure shows that it is not enough to consider the invariant sets for $\psi$ and $A$ separately }
    \label{fig:invariants}
\end{figure}
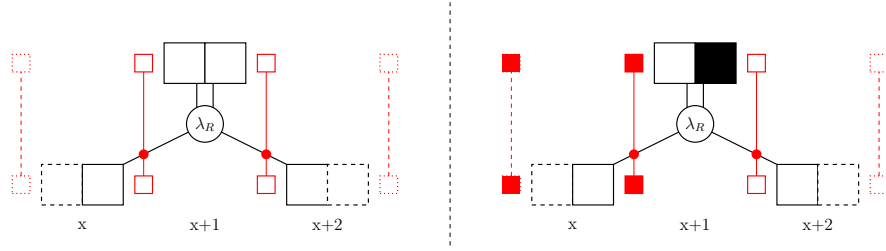

\section{Conclusion\label{sec:conclusion}}

\paragraph{Summary.} In this paper, we reformulated and generalized the theory of gauge-invariance in CA \cite{arrighi2018gauge}, to cater for non-abelian symmetry groups.
The gauging procedure was then made explicit and developed through an example : starting from a non-gauge-invariant theory and a set of gauge transformations, we introduced an external gauge field upon which the theory was made dependant, and extended the gauge transformation to this field, so as to obtain gauge-invariance. Finally the gauge field was `internalized' by providing a theory for its dynamics, yielding a complete gauge-invariant theory. Now, gauge-invariant theories are redundant almost by definition, and thus several theories may be equivalent to one another up to this redundancy. Equivalence between theories was formalized and characterized. Configuration that are related by a gauge-transformations were gathered into invariant sets, called invariant sets, and CA over these invariant sets were discussed.

\paragraph{Perspectives and related works.} Since gauge-invariance comes from Physics, the first extension of this model would be a non-abelian gauge-invariant Quantum CA (QCA). An abelian gauge-invariant QCA was already provided in \cite{arrighi2019quantum}, whereas a non-abelian gauge-invariance has been studied in the one-particle sector of QCA, namely quantum walks \cite{arnault2016quantum}: this extension is rather promising. \\ 
In the field of quantum computation, gauge-invariance is already mentioned for quantum error correction codes \cite{kitaev2003fault,nayak2008non} which can be understood through the redundancy inherent to gauge-invariant theories. The study of gauge-invariance in CA ought to be related, therefore, to questions of error correction for spatially-distributed computation models \cite{Toom,TVSMKP}.\\
Finally, gauge-invariance brings another symmetry to field CA, which may be interesting to study for itself, e.g. along the same methods used for color-blind CA \cite{salo2013}, where all cells get transformed by the same group element.

\bibliographystyle{splncs04}
\bibliography{biblio}

\end{document}